\newtheorem{theorem}{Theorem}
\newtheorem{definition}{Definition}
\newtheorem{thm}{Theorem}[section]
\theoremstyle{definition}
\newtheorem{example}[thm]{Example}
\theoremstyle{remark}
\let\oldmarginpar\marginpar
\renewcommand\marginpar[1]{\oldmarginpar[\raggedleft\footnotesize #1]%
{\raggedright\footnotesize #1}}
\begin{document}

\title[Specialization Models of Network Growth]{Specialization Models of Network Growth}

\keywords{Specialization, Network Growth, Modular and Hierarchical Structure, Intrinsic Stability}

\author{L. A. Bunimovich$^1$}
\address{School of Mathematics, Georgia Institute of Technology, 686 Cherry Street, Atlanta, GA 30332}
\email{bunimovich@math.gatech.edu}
\author{D. C. Smith$^2$}
\address{Department of Mathematics, Brigham Young University, TMCB 274, Provo, UT 84602}
\email{dallas.smith@mathematics.byu.edu}
\author{B. Z. Webb$^3$}
\address{Department of Mathematics, Brigham Young University, TMCB 274, Provo, UT 84602}
\email{bwebb@mathematics.byu.edu}

\begin{abstract}
One of the most important features observed in real networks is that, as a network's topology evolves so does the network's ability to perform various complex tasks. To explain this, it has also been observed that as a network grows certain subnetworks begin to specialize the function(s) they perform. Here, we introduce a class of models of network growth based on this notion of \emph{specialization} and show that as a network is specialized using this method its topology becomes increasingly sparse, modular, and hierarchical, each of which are important properties observed in real networks. This procedure is also highly flexible in that a network can be specialized over any subset of its elements. This flexibility allows those studying specific networks the ability to search for mechanisms that describe the growth of these particular networks. As an example, we find that by randomly selecting these elements a network's topology acquires some of the most well-known properties of real networks including the small-world property, disassortativity, power-law like degree distributions, and power-law like clustering coefficients. As far as the authors know, this is the first such class of models that creates an increasingly modular and hierarchical network topology with these properties.
\end{abstract}

\maketitle

\section{Introduction}

Networks studied in the biological, social, and technological sciences perform various tasks, which are determined by both the network's topology as well as the network's dynamics. In the biological setting gene regulatory and metabolic networks allow cells to organize into tissues and tissues into organs whose dynamics are essential to the network's function \cite{KS08,BO04}, e.g. a beating heart in a circulatory network. Neuronal networks are responsible for complicated processes related to cognition and memory, which are based on the network's structure of connections as well as the electrical dynamics of the network's neurons \cite{BS09}. Social networks such as Facebook, Twitter, the interactions of social insects \cite{CF14,HEOGF13}, and professional sports teams \cite{FAIPW12} function as a collection of overlapping communities or a single unified whole based on the underlying topology and hierarchies present within the network's social interactions. Technological networks such as the internet together with the World Wide Web allow access to information based on the topology of network links and the network's dynamic ability to route traffic.

In the study of networks a network's \emph{topology} refers to the network's structure of interactions while a network's \emph{dynamics} is the pattern of behavior exhibited by the network's elements \cite{BS09}. It is worth emphasizing that real-world networks are not only dynamic in terms of the behavior of their elements but also in terms of their topology, both of which effect the network's function. For example, the World Wide Web has an ever changing structure of interactions as web pages and the hyperlinks connecting these pages are updated, added, and deleted (see \cite{GS09} for a review of the evolving topology of networks).

A number of network formation models have been proposed to describe the type of growth observed in real networks. These models fall to a large extent into three categories. The first are those related to the Barabasi and Albert model \cite{Bara99} and its predecessor the Price model \cite{Price76}, in which elements are added one by one to a network and are preferentially attached to vertices with high \emph{degree}, i.e. to vertices with a high number of neighbors or some variant of this process \cite{Bara00,Doro00,Krap01}. The second are \emph{vertex copying models} in which new elements are added to a network by randomly choosing an existing network element. A new element is then connected to each neighbor of this existing network element with probability $p\in(0,1)$ or to some other element with probability $1-p$ \cite{Klein99,Sole02,Vaz03}. Third are the \emph{network optimization models} where the topology of the network is evolved to minimize some global constraint, e.g. operating costs vs. travel times in a transportation network \cite{Ferrer03,Gastner06}.

These models are devised to create networks that exhibit some of the most widely observed features found in real networks. This includes (i) having a degree distribution that follows a power-law, i.e. being \emph{scale-free}, (ii) having a \emph{disassortative neighbor} property where vertices with high (low) degree have neighbors with low (high) degree, (iii) having a high \emph{clustering coefficient} indicating the presence of many triangles within the network, and (iv) having the \emph{small-world property}, meaning that the average distance between any two network elements is logarithmic in the size of the network (see \cite{Newman10} for more details on these properties).

Aside from these structural features, one of the hallmarks of a real network is that, as its topology evolves, so does its ability to perform complex tasks. This happens, for instance, in neural networks, which become more modular in structure as individual parts of the brain become increasingly specialized in function \cite{Sporns13}. Similarly, gene regulatory networks can specialize the activity of existing genes to create new patterns of gene behavior \cite{Esp10}. In technological networks such as the internet, this differentiation of function is also observed and is driven by the need to handle and more efficiently process an increasing amount of information.

Here we propose a very different class of models than those described above, which are built on this notion of specialization. We refer to these as \emph{specialization models} of network growth. These models are based on the fundamental idea that as a network specializes the function of one or more of its \emph{components}, i.e. a subnetwork(s) that performs a specific function, the network first creates a number of copies of this component. These copies are attached to the network in ways that reflect the original connections the component had within the network only sparser. The new copies ``specialize" the function of the original component in that they carry out only those functions requiring these specific connections (cf. Figure \ref{fig00}).

The components which are specialized in this growth process form \emph{motifs}, i.e. statistically significant structures in which particular network functions are carried out. As copies of these motifs are placed throughout the network via the process of specialization the result is an increase in the network's modularity \cite{Milo02}. Consequently, repeated application of this process results in a \emph{hierarchical topology} in which modular structures appear at multiple scales. Moreover, because new components are far less connected to the network than the original components the result is an increasingly sparse network topology. Hence the network acquires a modular \cite{Newman2006}, hierarchical \cite{Clauset08,Leskovec2008}, as well as sparse topology \cite{N03,HG08} each of which is a distinguishing feature of real networks when compared, for instance, to random graphs (see \cite{GS09} Section 6.3.2.1).

Importantly, our model of network growth is extremely flexible in that a network can be uniquely specialized over any subset $B$ of its elements. We refer to any such subset $B$ as a network \emph{base}. Since $B$ can be any subset of a network's elements there is an significant number of ways in which a network can be specialized. Hence, \emph{many} network growth models. An obvious application is, given a particular network, to find a base over which this network can be specialized that evolves the its topology in a way that models its observed growth. Finding a \emph{rule} $\tau$ that generates this base is a natural objective of a network scientist using this model to investigate their particular network(s) of interests. The reason is that finding such a rule suggests a mechanism for the particular network's growth that can then be tested against the growth of the actual network.

A particularly simple rule we consider here is the rule $r=r_p$ for $p\in(0,1)$ that uniformly selects a random network base consisting of $p$ percent of the network's elements. Under this specialization rule we find that an initial network evolves under a sequence of specializations into a network that is numerically (i) \emph{scale free}, (ii) has a \emph{disassortative neighbor} property, (iii) has a power-law like \emph{clustering coefficients} and (iv) the \emph{small-world property}. Hence, this random variant of the specialization model appears to capture a number of the well-know properties observed in real networks. To our knowledge this is the only such class of models to capture these properties along with creating an increasingly sparse, modular, and hierarchical network topology.

Additionally, we show how specialization rules can be used to compare the topology of different networks. Specifically, two graphs $G$ and $H$ are considered to be \emph{similar} to each other with respect to a rule $\tau$ if they specialize to the same graph under $\tau$. This notion of similarity, which we refer to as \emph{specialization equivalence}, can be used to partition any set of networks into those that are similar, i.e. are specialization equivalent, with respect to a given rule $\tau$ and those that are not (see Section \ref{sec2}, Theorem \ref{thm2}). One reason for designing such a rule $\tau$ is that typically it is not obvious that two different graphs are in some sense equivalent. That is, two networks may be similar but until both are specialized with respect to $\tau$ this may be difficult to see. Here we show that by choosing an appropriate rule $\tau$ one can discover this similarity (see example \ref{ex:evoequ}). Of course, it is important that this rule be designed by the particular biologist, chemist, physicist, etc. to have some significance with respect to the nature of the networks under consideration.

In a following paper we rigorously show that these specialization models also preserve a number of spectral and dynamical properties of a network. Specifically, we show that as a network is specialized the eigenvalues of the resulting network are those of the original network together with the eigenvalues of the specialized components, extending the theory of isospectral network transformations found in \cite{BWBook}. Additionally, we show that the eigenvector centrality of the base vertices of a network remain unchanged as the network is specialized. In terms of dynamic properties we prove that if a dynamical network is intrinsically stable, which is a stronger form of a standard notion of stability (see \cite{BW13} for more details), then any specialized version of this network will also be intrinsically stable. Hence, network growth given by our models of specialization will not destabilize the network's dynamics if the network has this stronger version of stability. This is important in many real-world application since network growth can have a destabilizing effect on a network's dynamics, e.g. cancer, which is the abnormal growth of cells has this effect in biological networks.

The paper is organized as follows. In Section \ref{sec2} we introduce the specialization model and the notion of specialization rules. In Section \ref{sec3} we show that if a network is randomly specialized over a fixed percentage of its elements the result is a network that has many properties found in real-world networks including the small-world property, power-law like degree distributions and clustering coefficients. In Section \ref{sec4} we describe the notion of specialization equivalence and how this notion can be used to compare the structure of different networks. The last section, Section \ref{conc} contains some concluding remarks.

\section{Specialization Model of Network Growth}\label{sec2}

The standard method used to describe the topology of a network is a graph. A \emph{graph} $G=(V,E,\omega)$ is composed of a \emph{vertex set} $V$, an \emph{edge set} $E$, and a function $\omega$ used to weight the edges of the graph. The vertex set $V$ represents the \emph{elements} of the network, while the edges $E$ represent the links or \emph{interactions} between these network elements. The weights of the edges, given by $\omega$, give some measure of the \emph{strength} of these interactions. Here we consider weights that are real numbers, which account for the vast majority of weights used in network analysis \cite{Newman10}.

The edges $E$ of a graph can either be \emph{directed} or \emph{undirected}, \emph{weighted} or \emph{unweighted}. Here we consider, without loss in generality, those graphs that have weighted directed edges. The reason is that an undirect edge is equivalent to two directed edges pointing in either direction and any unweighted edge can be weighted by giving the edge unit weight.

For the graph $G=(V,E,\omega)$ we let $V=\{v_1,\dots,v_n\}$, where $v_i$ represents the $i$th network element. We let $e_{ij}$ denote the directed edge that begins at $v_i$ and ends at $v_j$. In terms of the network, the edge $e_{ij}$ belongs to the edge set $E$ if the $i$th network element has some direct influence or is linked to the $j$th network element.

As mentioned in the introduction, one of the hallmarks of real networks is that as a network evolves so does its ability to perform various tasks. It has been observed that to accomplish this a network will often specialize the tasks performed by of one or more of its components, i.e. subnetworks. As motivation for our model of network growth we give the following example of network specialization.

\begin{figure}
\vspace{0.75cm}
\begin{center}
\begin{tabular}{cc}
    \begin{overpic}[scale=.33]{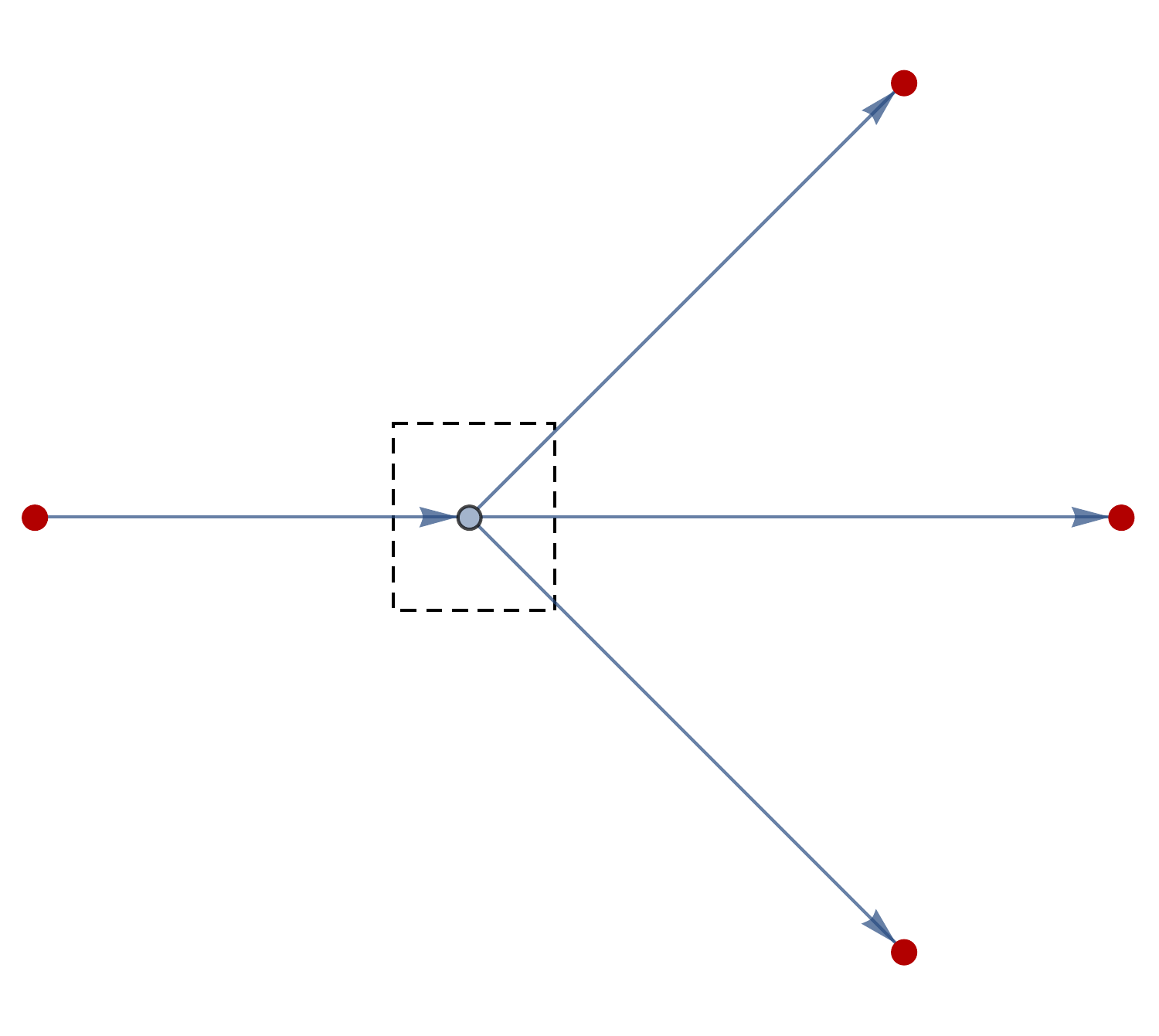}
    \put(75,12){\scriptsize{related ``elements"}}
    \put(87,6){\scriptsize{pages}}
    \put(88,49){\scriptsize{related ``planets"}}
    \put(101,43){\scriptsize{pages}}
    \put(75,86){\scriptsize{related ``mythology"}}
    \put(87,80){\scriptsize{pages}}
    \put(27,60){\scriptsize{``Mercury"}}
    \put(35,55){\scriptsize{page}}
    \put(15,-15){Undifferentiated ``Mercury"}
    \put(33,-24){Wikipedia Page}
    \end{overpic} &
    \hspace{2.25cm}
    \begin{overpic}[scale=.33]{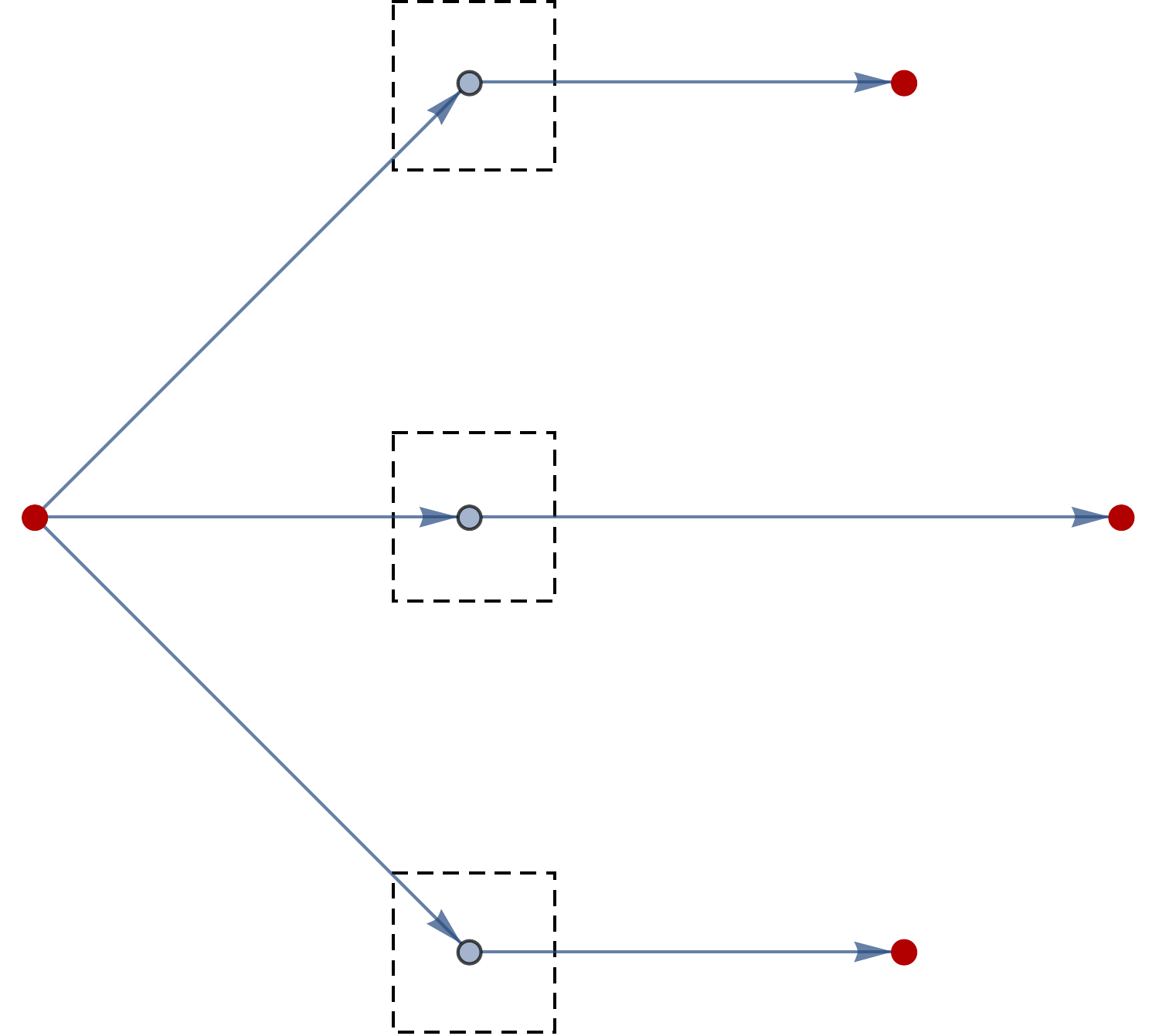}
    \put(69,12){\scriptsize{related ``elements"}}
    \put(81,6){\scriptsize{pages}}
    \put(84,49){\scriptsize{related ``planets"}}
    \put(100,43){\scriptsize{pages}}
    \put(69,86){\scriptsize{related ``mythology"}}
    \put(81,80){\scriptsize{pages}}
    \put(26,62){\scriptsize{``Mercury"}}
    \put(31,55){\scriptsize{(planet)}}
    \put(26,100){\scriptsize{``Mercury"}}
    \put(25,93){\scriptsize{(mythology)}}
    \put(27,24){\scriptsize{``Mercury"}}
    \put(29,17){\scriptsize{(element)}}
    \put(15,-15){Disambiguated ``Mercury"}
    \put(30,-24){Wikipedia Pages}
    \end{overpic}
\end{tabular}
\vspace{1cm}
\end{center}
  \caption{This figure shows the effect of disambiguating the Wikipedia page on ``Mercury" into three distinct webpages, which are respectively Mercury the mythological figure, Mercury the planet, and mercury the element.}\label{fig00}
\end{figure}

\begin{example}\label{ex:dis}\textbf{(Wikipedia Disambiguation)}
The website Wikipedia is a collection of webpages consisting of articles that are linked by topic. The website evolves as new articles are either added, linked, and modified within the existing website. One of the ways articles are added, linked, and modified is that some article within the website is disambiguated. If an article's content is deemed to refer to a number of distinct topics then the article can be \emph{disambiguated} by separating the article into a number of new articles, each on a more specific or \emph{specialized} topic.

Wikipedia's own page on disambiguation gives the example that the word ``Mercury" can refer to either Mercury the \emph{mythological figure}, Mercury the \emph{planet}, or mercury the \emph{element} \cite{Wiki17}. To emphasize these differences the Wikipedia page on Mercury has been disambiguated into three pages on Mercury; one for each of these subcategories. Users arriving at the Wikipedia ``Mercury" page \cite{Merc17} are redirected to these pages (among a number of other related pages).

The result of this disambiguation is shown in Figure \ref{fig00}. In the original undifferentiated Mercury page users arriving from other pages could presumably find links to other ``mythology", ``planet", and ``element"  pages (see Figure \ref{fig00}, left). After the page was disambiguated users were linked to the same pages but only those relevant to the particular ``Mercury" page they had chosen (see Figure \ref{fig00}, right). In terms of the topology of the network, this disambiguation results in the creation of a number of new ``Mercury" pages each of which is linked to a subset of pages that were linked to the original ``Mercury" page. Growth via disambiguation is a result of the new ``copies" of the original webpage.

However, what is important to the functionality of the new specialized network is that the way in which these new copies are linked to the unaltered pages should reflect the topology of the original network. In our model the way in which we link these new components, which can be much more complex than single vertices, is by separating out the paths and cycles on which these components lie, in a way that mimics the original network structure.
\end{example}

Hence, to describe our models of network specialization and its consequences we first need to consider the paths and cycles of a graph. A \emph{path}\index{path} $P$ in the graph $G=(V,E,\omega)$ is an ordered sequence of distinct vertices $P=v_1,\dots,v_m$ in $V$ such that $e_{i,i+1}\in E$ for $i=1,\dots,m-1$. If the vertices $v_1$ and $v_m$ are the same then $P$ is a \emph{cycle}\index{cycle}. If it is the case that a cycle contains a single vertex then we call this cycle a \emph{loop}\index{loop}.

Another fundamental concept that is needed is the notion of a strongly connected component. A graph $G=(V,E,\omega)$ is \emph{strongly connected} if for any pair of vertices $v_i,v_j\in V$ there is a path from $v_i$ to $v_j$ or $G$ consists of a single vertex. A \emph{strongly connected component} of a graph $G$ is a subgraph that is strongly connected and is maximal with respect to this property.

Because we are concerned with evolving the topology of a network in ways that preserve, at least locally, the network's topology we will also need the notion of a graph restriction. For a graph $G=(V,E,\omega)$ and a subset $B\subseteq V$ we let $G|_{B}$ denote the \emph{restriction} of the graph $G$ to the vertex set $B$, which is the subgraph of $G$ on the vertex set $B$ along with any edges of the graph $G$ between vertices in $B$. We let $\bar{B}$ denote the \emph{complement} of $B$, so that the restriction $G|_{\bar{B}}$ is the graph restricted to the complement of those vertices not in $B$.

The key to specializing the structure of a graph is to look at the strongly connected components of the restricted graph $G|_{\bar{B}}$. If $S_1,\dots,S_m$ denote these strongly connected components then we will need to find paths or cycles of these components, which we refer to as \emph{components branches}.

\begin{figure}
\begin{center}
\begin{tabular}{c}
    \begin{overpic}[scale=.33]{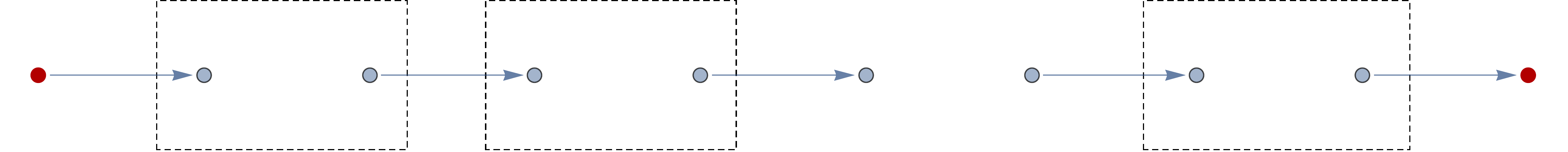}
    \put(-1,4.5){$v_i$}
    \put(6,6){$e_0$}
    \put(16,4){$S_1$}
    \put(27,6){$e_1$}
    \put(38,4){$S_2$}
    \put(49,6){$e_2$}
    \put(56.75,4.55){\Huge$\dots$}
    \put(66,6){$e_{m-1}$}
    \put(79.5,4){$S_m$}
    \put(91,6){$e_m$}
    \put(99,4.5){$v_j$}
    \end{overpic}
\end{tabular}
\end{center}
  \caption{A representation of a path of components is shown, consisting of the sequence $S_1,\dots,S_m$ of components beginning at vertex $v_i\in B$ and ending at vertex $v_j\in B$. From $S_k$ to $S_{k+1}$ there is a single directed edge $e_{k+1}$. From $v_i$ to $S_1$ and from $S_m$ to $v_j$ there is also a single directed edge.}\label{fig01}
\end{figure}

\begin{definition}\label{def:componentbranch} \textbf{(Component Branches)}
For a graph $G=(V,E,\omega)$ and vertex set $B\subseteq V$ let $S_1,\dots,S_m$ be the strongly connected components of $G|_{\bar{B}}$. If there are edges $e_0,e_1,\dots,e_m\in E$ and vertices $v_i,v_j\in B$ such that\\
(i) $e_k$ is an edge from a vertex in $S_k$ to a vertex in $S_{k+1}$ for $k=1,\dots,m-1$;\\
(ii) $e_0$ is an edge from $v_i$ to a vertex in $S_1$; and\\
(iii) $e_m$ is an edge from a vertex in $S_m$ to $v_j$, then we call the ordered set
\[
\beta=\{v_i,e_{0},S_1,e_{1},S_2,\dots,S_m,e_{m},v_{j}\}
\]
a \emph{path of components} in $G$ with respect to $B$. If $v_i=v_j$ then $\beta$ is a \emph{cycle of components}. We call the collection $\mathcal{B}_B(G)$ of these paths and cycles the \emph{component branches} of $G$ with respect to the base set of vertices $B$.
\end{definition}

A representation of the path of components is shown in Figure \ref{fig01}. The sequence of components $S_1,\dots,S_m$ in this definition can be empty in which case $m=0$ and $\beta$ is the path $\beta=\{v_i,v_j\}$ or loop if $v_i=v_j$. It is worth emphasizing that each branch $\beta\in\mathcal{B}_B(G)$ is a subgraph of $G$. Consequently, the edges of $\beta$ inherit the weights they had in $G$ if $G$ is weighted. If $G$ is unweighted then its component branches are likewise unweighted.

Once a graph has been decomposed into its various branches we construct the specialized version of the graph by merging these branches as follows.

\begin{definition} \textbf{(Graph Specialization)}\label{def:exp}
Suppose $G=(V,E,\omega)$ and $B\subseteq V$. Let $\mathcal{S}_B(G)$ be the graph which consists of the component branches $\mathcal{B}_{B}(G)=\{\beta_1,\dots,\beta_{\ell}\}$ in which we \emph{merge}, i.e. identify, each vertex $v_i\in B$ in any branch $\beta_j$ with the same vertex $v_i$ in any other branch $\beta_k$. We refer to the graph  $\mathcal{S}_B(G)$ as the \emph{specialization} of $G$ over the \emph{base} vertex set $B$.
\end{definition}

A specialization of a graph $G$ over a base vertex set $B$ is a two step process. The first step is the construction of the components branches. The second step is the merging of these components into a single graph. We note that, in a component branch $\beta\in\mathcal{B}_B(G)$ only the first and last vertices of $\beta$ belong to the base $B$. The specialized graph $\mathcal{S}_B(G)$ is therefore the collection of branches $\mathcal{B}_B(G)$ in which we identify an endpoint of two branches if they are the same vertex. This is demonstrated in the following example.

\begin{figure}
\begin{center}
\begin{tabular}{cc}
    \begin{overpic}[scale=.2]{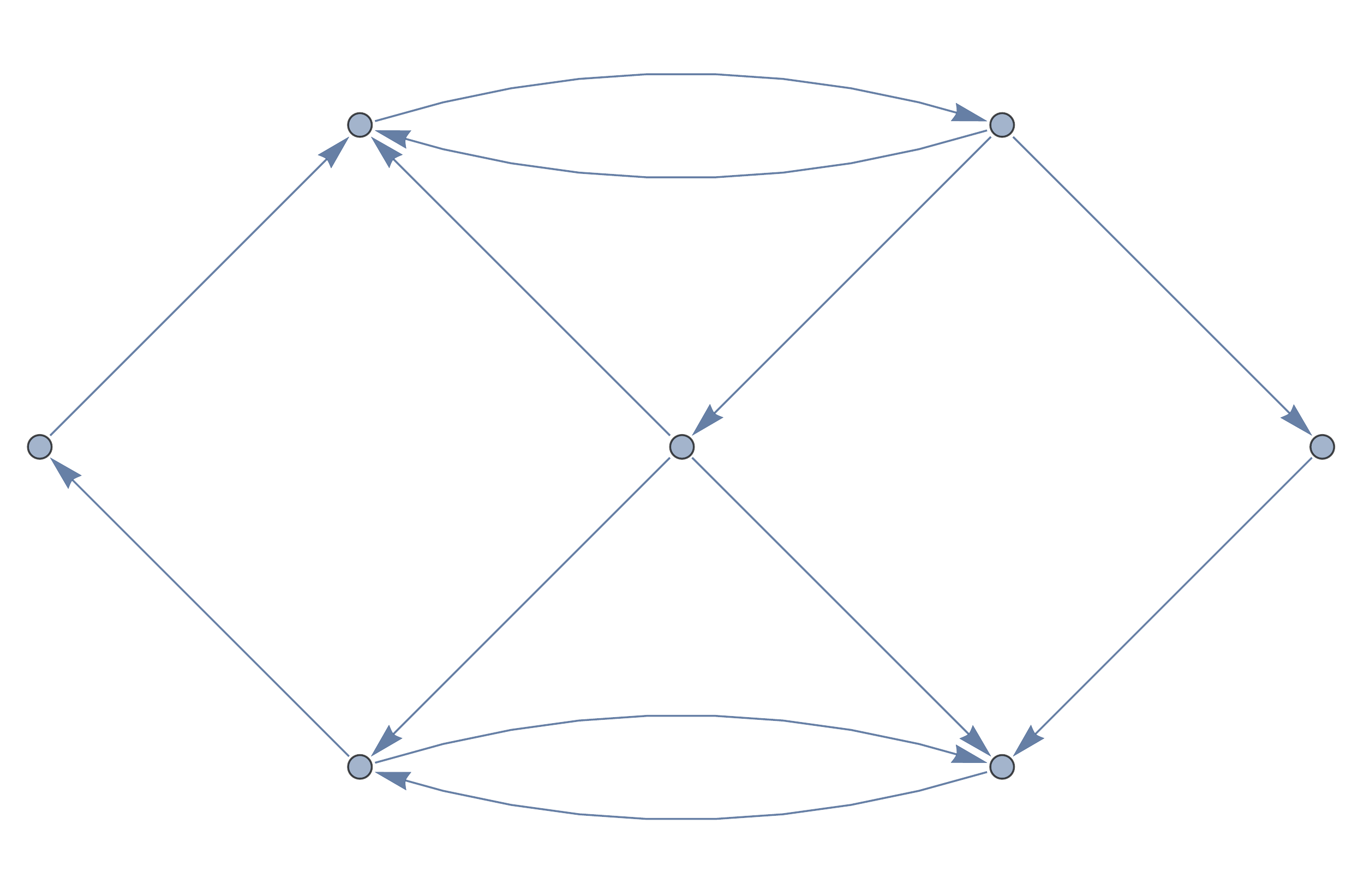}
    \put(47,-5){$G$}
    \put(-5,32){$v_1$}
    \put(22,60){$v_2$}
    \put(73,60){$v_3$}
    \put(99,32){$v_4$}
    \put(48,26.3){$v_5$}
    \put(73,3){$v_6$}
    \put(22,3){$v_7$}
    \end{overpic} &
    \hspace{0.7in}
    \begin{overpic}[scale=.2]{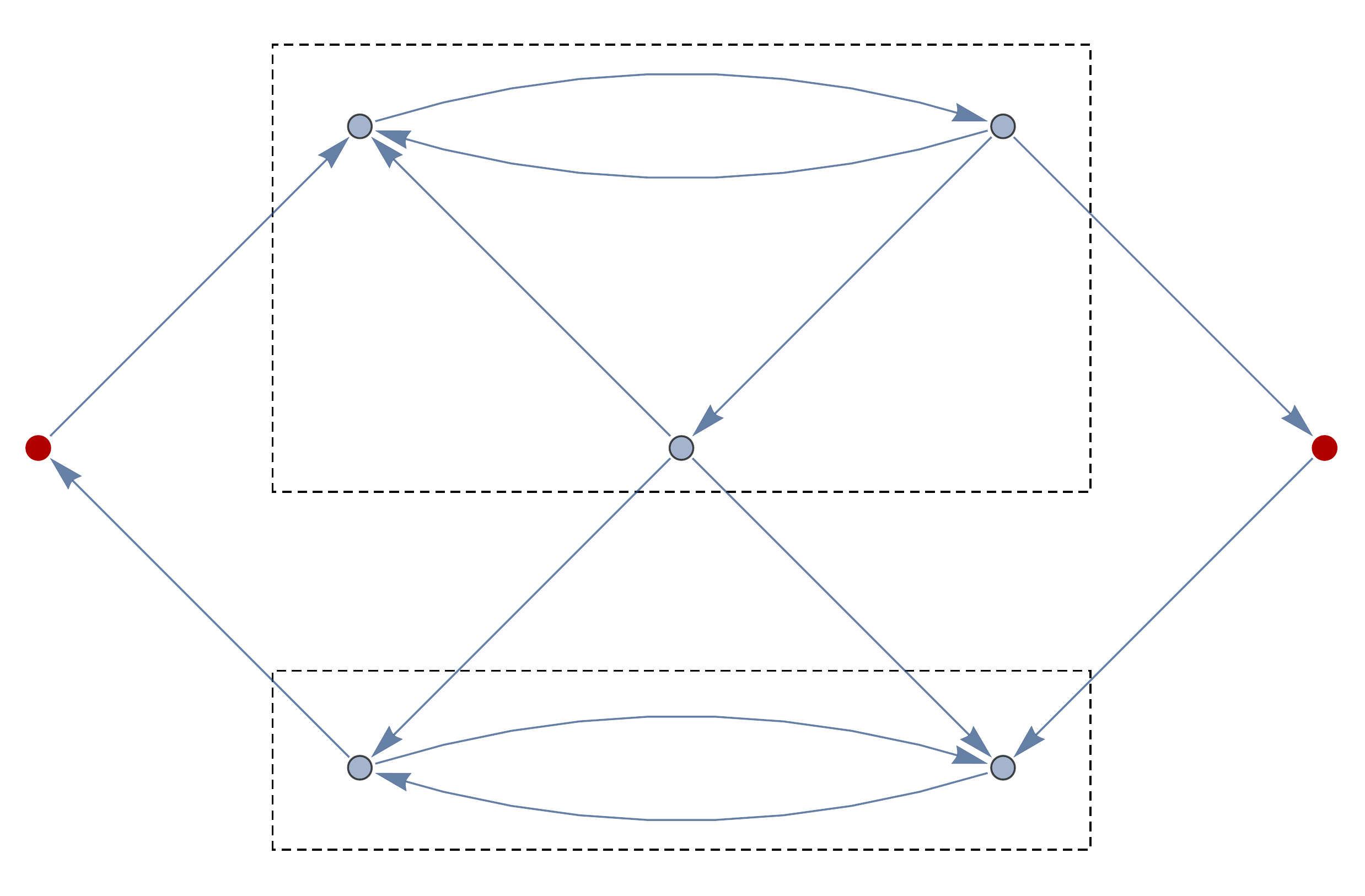}
    \put(46,-5){$G$}
    \put(-5,32){$v_1$}
    \put(99,32){$v_4$}
    \put(46,43){$S_1$}
    \put(46,7.25){$S_2$}
    \end{overpic}
\end{tabular}
\end{center}
  \caption{The unweighted graph $G=(V,E)$ is shown left. The components $S_1$ and $S_2$ of $G$ with respect to the vertex set $B=\{v_1,v_4\}$ are shown right. These components are the subgraphs $S_1=G|_{\{v_2,v_3,v_5\}}$ and $S_2=G|_{\{v_6,v_7\}}$, indicated by the dashed boxes, which are the strongly connected components of the restricted graph $G|_{{\bar{B}}}$.}\label{fig1}
\end{figure}

\begin{example}\label{ex:2} \textbf{(Constructing Graph Specializations)}
Consider the \emph{unweighted} graph $G=(V,E)$ shown in Figure \ref{fig1} (left). For the base vertex set $B=\{v_1,v_4\}$ the specialization $\mathcal{S}_B(G)$ is constructed as follows.\\

\noindent\emph{Step 1:} \emph{Construct the branch components of $G$ with respect to $B$.} The graph $G|_{\bar{B}}$ has the strongly connected components $S_1=G|_{\{v_2,v_3,v_5\}}$ and $S_2=G|_{\{v_6,v_7\}}$, which are indicated in Figure \ref{fig1} (right). The set $\mathcal{B}_B(G)$ of all paths and cycles of components beginning and ending at vertices in $B$ consists of the component branches
\begin{align*}
\beta_1&=\{v_1,e_{12},S_1,e_{34},v_4\} \ \ \ \ \ \ \ \ \ \ \ \ \ \beta_2=\{v_4,e_{46},S_2,e_{71},v_1\}\\
\beta_3&=\{v_1,e_{12},S_1,e_{56},S_2,e_{71},v_1\} \ \ \ \beta_4=\{v_1,e_{12},S_1,e_{57},S_2,e_{71},v_1\};
\end{align*}
which are shown in Figure \ref{fig2} (left).\\

\noindent\emph{Step 2:} \emph{Merging the branch components.} By merging each of the vertices $v_1\in B$ in all branches of $\mathcal{B}_B(G)=\{\beta_1,\beta_2,\beta_3,\beta_4\}$ shown in Figure \ref{fig2} (left) and doing the same for the vertex $v_4\in B$, the result is the graph $\mathcal{S}_B(G)$ shown in Figure \ref{fig2} (right), which is the specialization of $G$ over the base vertex subset $B$.
\end{example}

\begin{figure}
\begin{center}
\begin{tabular}{cc}
    \begin{overpic}[scale=.55]{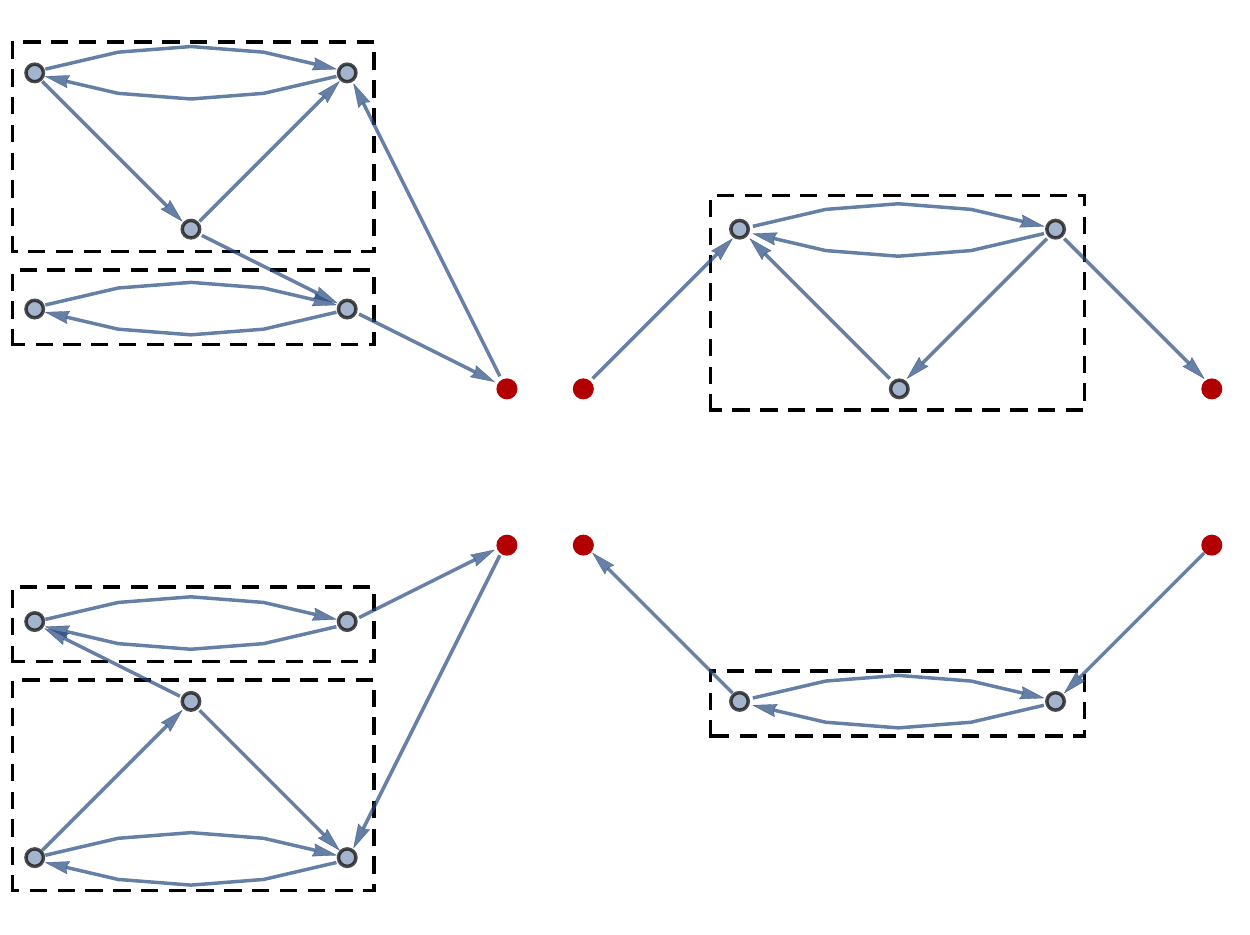}
    \put(49,0){$\mathcal{B}_B(G)$}
    \put(14,74){$\beta_4$}
    \put(14,-2){$\beta_3$}
    \put(70,10){$\beta_2$}
    \put(70,62){$\beta_1$}

    \put(94.5,39.5){$v_4$}
    \put(95,33){$v_4$}
    \put(38,39.5){$v_1$}
    \put(38,33){$v_1$}

    \put(45,39.5){$v_1$}
    \put(45,33){$v_1$}
    \end{overpic} &
    \hspace{0.55cm}
    \begin{overpic}[scale=.48]{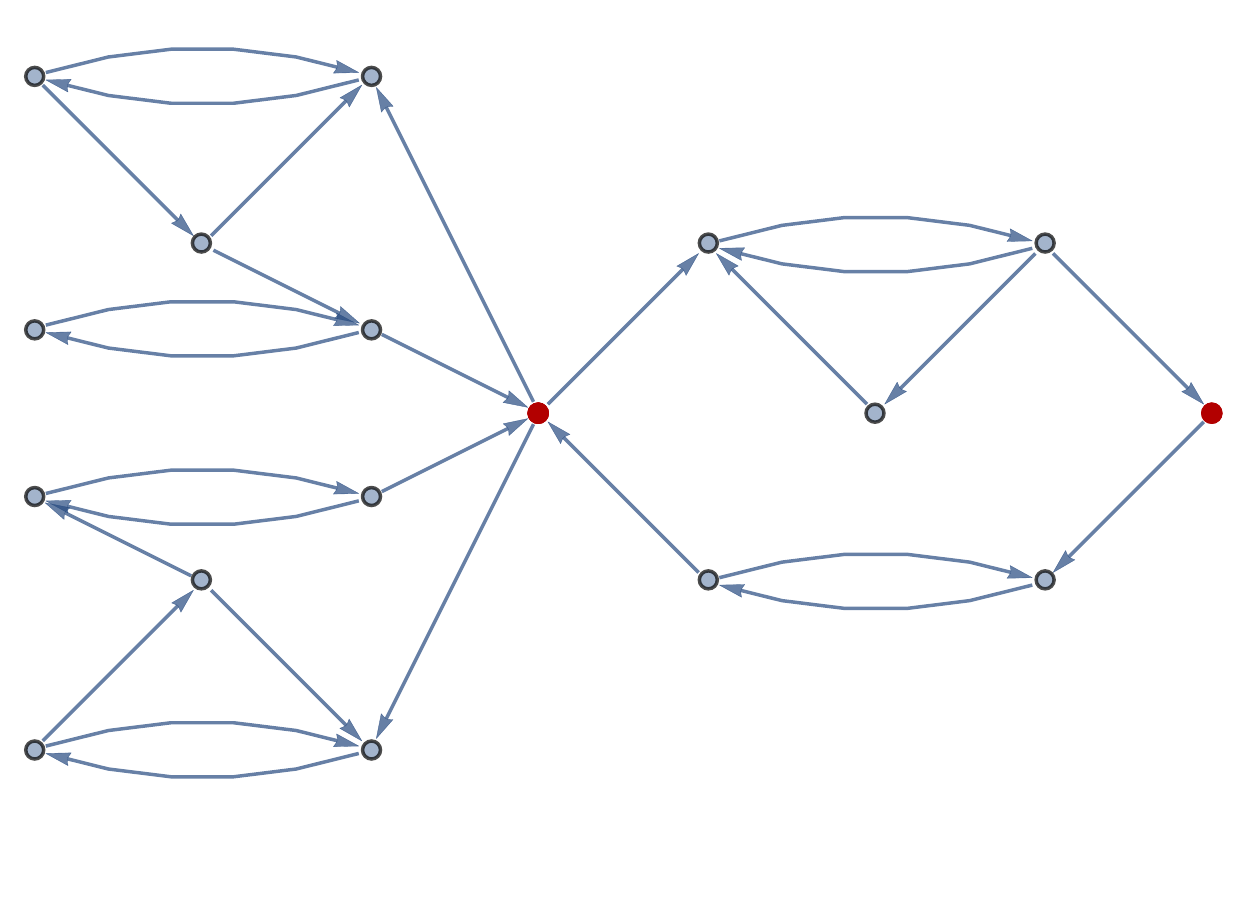}
    \put(45,0){$\mathcal{S}_B(G)$}

    \put(29,39){$v_1$}
    \put(100,39){$v_4$}
    \end{overpic}
\end{tabular}
\end{center}
  \caption{The component branches $\mathcal{B}_B(G)=\{\beta_1,\beta_2,\beta_3,\beta_4\}$ of the graph $G=(V,E)$ from figure \ref{fig1} over the base vertex set $B=\{v_1,v_4\}$ are shown (left). The specialized graph $\mathcal{S}_B(G)$ is shown (right), which is made by merging each of the vertices $v_1$ and $v_4$ respectively in each of the branches of $\mathcal{B}_B(G)$. The edge labels and vertex labels are omitted, except for those vertices in $B$, to emphasize how these vertices are identified.}\label{fig2}
\end{figure}

To summarize, our model of network growth consists in evolving the topology of a given network by selecting some base subset of the network's elements and specializing the graph associated with the network over the corresponding vertices. We refer to this process as the \emph{specialization model} of network growth.

The idea is that in an information network, such as the World Wide Web, this model of specialization can model the differentiation and inclusion of new information (cf. Example \ref{ex:dis}). In a biological model this can be used to describe various developmental processes, for instance, the specialization of cells in to tissue. In a social network this model can similarly be used to model how new relationships are formed when, for example, an individual is introduced by someone to their immediate group of friends.

It is also worth mentioning that a network specialization evolves the network's topology by maintaining the interactions between its base elements $B$ and by differentiating the other network functions into sequences of components. The result is a network with many more of these components. These components are important for a number reasons. The first is that they form network \emph{motifs}, which are statistically important subgraphs within the network that typically perform a specific network function \cite{A07}. Second, because there are very few connections between these components the resulting network has a far more modular structure, which is a feature found in real networks (see \cite{Newman2006} for a survey of modularity). Third, because of the number of copies of the same component, the specialized graph has a certain amount of redundancy in its topology, which is another feature observed in real networks \cite{MSA08,TSE99}. Last, specializations results in \emph{sparser} graphs, i.e. graphs in which the ratio of edges to vertices is relatively small, which is again a characteristic found in real networks \cite{N03,HG08}.

Additionally, many networks exhibit hierarchical organization, in which network vertices divide into groups or components that further subdivide into smaller groups of components, and so on over multiple scales. It has been observed that these components often come from the same functional units, e.g. ecological niches in food webs, modules in biochemical networks including protein interaction networks, metabolic networks or genetic regulatory networks or communities in social networks \cite{Clauset08,Leskovec2008}. Because new components are created each time a graph is specialized a network becomes increasingly hierarchial as this process of specialization is repeated.

\section{Specialization Rules}\label{sec3}

As a significantly large number of bases are possible for any reasonably sized network a natural question is, given a particular real-world network (or class of networks) can we find a base or sequence of bases that can be used to model this network's growth via specialization. Another way of stating this is, is it possible to find a specialization rule that selects network base(s) that can be used to specialize the network in a way that mimics its actual growth.

Here a \emph{specialization rule} $\tau$ is a rule that selects for any graph $G=(V,E,\omega)$ a base vertex subset $V_{\tau}(G)\subseteq V$. For simplicity, we let
\[
\tau(G)=S_{V_{\tau}(G)}(G) \ \ \text{and} \ \ \tau^k(G)=\tau(\tau^{k-1}(G)) \ \ \text{for} \ \ k>0
\]
denote the specialization of $G$ with respect to $\tau$ and the $k$th specialization of $G$ with respect to $\tau$, respectively. Each rule $\tau$ generates a different type of growth and as such can be thought of as inducing a different model of network growth.

The specialization $\tau(G)$ is \emph{unique} if $\tau$ selects a unique base vertex subset of $G$. Not all rules produce a unique outcome as $\tau$ can be a rule that selects vertices of $G$ in some random way. For the moment we consider an important example of a random specialization rule. The reason we focus on this particular rule is that its repeated use leads to graphs (networks) that exhibit some of the most well-know properties observed in real networks.

\begin{example}\label{ex:rand}\textbf{(Random Specializations)}
For $p\in(0,1)$ let $r=r_p$ be the specialization rule that uniformly selects a random network base consisting of $p$ percent of the network's elements rounded to the nearest whole number. To understand the effect this rule has on general types of networks we start by first describing its effect on a single specific network.

Displayed in Figure \ref{fig:1} is an initial network given by the graph $G$ on ten vertices and fifteen edges, which is sequentially specialized using the rule $r=r_p$ for $p=.91$. The graph $G$ is specialized a total of seven times, which generates the sequence $G,r(G),\dots,r^{7}(G)$ of graphs. The top row of Figure \ref{fig:1} shows the graphs $G$, $r^2(G)$, $r^4(G)$, and $r^{6}(G)$.

\begin{figure}
\begin{center}
\begin{overpic}[scale=.39]{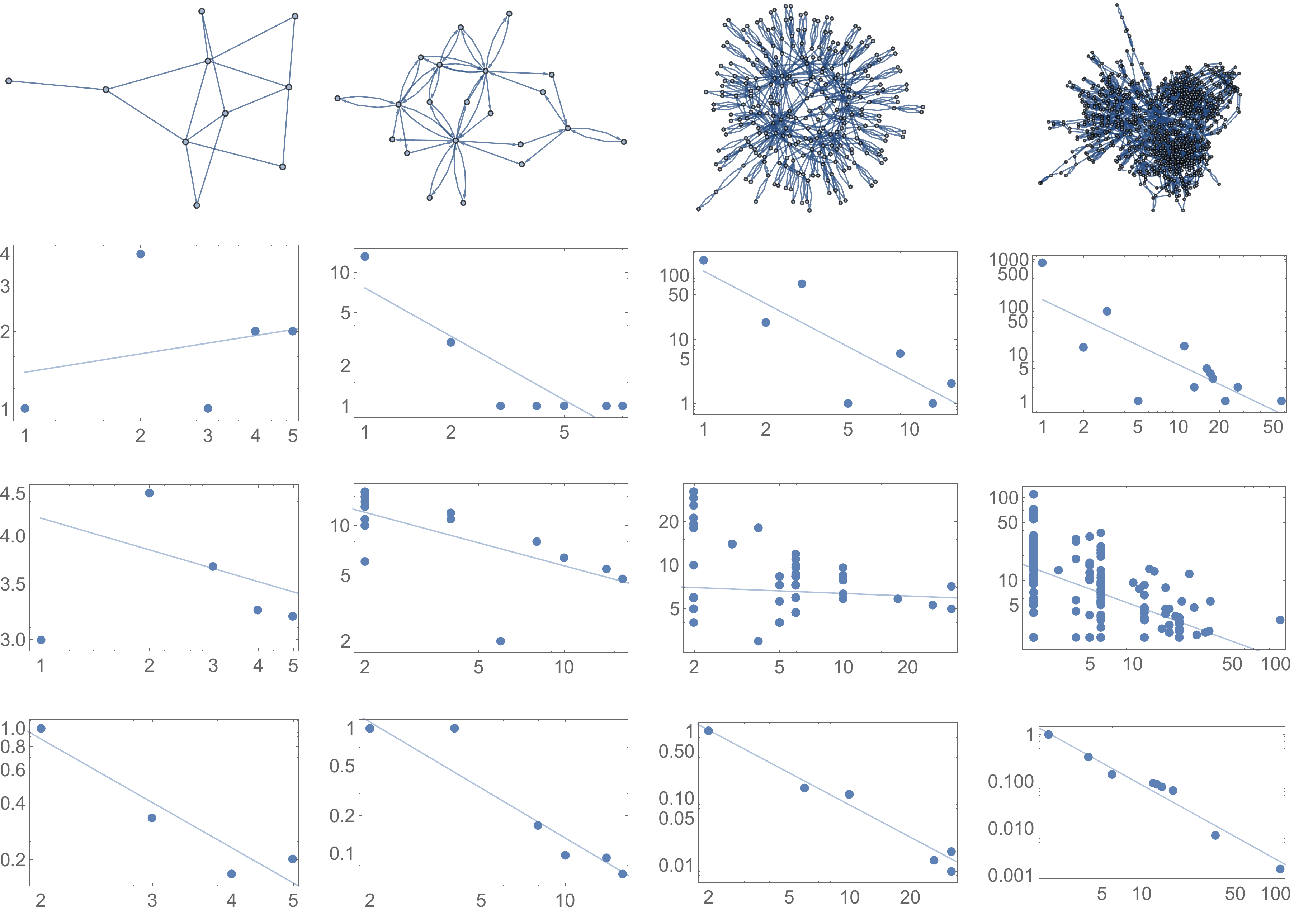}
\put(13,53.25){\fontsize{8}{5} $G$}
\put(34.5,53.25){\tiny$r^2(G)$}
\put(60,53.25){\tiny$r^4(G)$}
\put(86,53.25){\tiny$r^6(G)$}
\put(6,-1){\tiny vertex degree}
\put(32,-1){\tiny vertex degree}
\put(57,-1){\tiny vertex degree}
\put(83.5,-1){\tiny vertex degree}
\put(-2,43){\rotatebox{90}{ \fontsize{7}{5}\selectfont {count}}}
\put(-2,20){\rotatebox{90}{ \fontsize{7}{5}\selectfont {mean nbr. degree}}}
\put(-2,3){\rotatebox{90}{ \fontsize{7}{5}\selectfont {clustering coef.}}}
\put(11.8,50.1){\fontsize{8}{5} $\gamma_0=-0.24$}
\put(38,50.1){\fontsize{8}{5} $\gamma_2=1.20$}
\put(64,49.8){\fontsize{8}{5} $\gamma_4=1.69$}
\put(89,49.7){\fontsize{8}{5} $\gamma_6=1.37$}
\put(13,32){\fontsize{8}{5} $\eta_0=0.13$}
\put(38,32){\fontsize{8}{5} $\eta_2=0.46$}
\put(64,32){\fontsize{8}{5} $\eta_4=0.06$}
\put(89,32){\fontsize{8}{5} $\eta_6=0.63$}
\put(13,13.5){\fontsize{8}{5} $\alpha_0=1.93$}
\put(38,13.5){\fontsize{8}{5} $\alpha_2=1.33$}
\put(64,13.5){\fontsize{8}{5} $\alpha_4=1.60$}
\put(89,13.5){\fontsize{8}{5} $\alpha_6=1.58$}
\end{overpic}
\end{center}
  \caption{The top row shows the graph $G$ as well as its specializations $r^2(G)$, $r^4(G)$, and $r^{6}(G)$ for $r=r_{.91}$. Each column in this figure corresponds to the graph in the top row, respectively. The second row shows the degree distribution of each graph on a log-log plot with a best-fit line with slope $-\gamma_i$, demonstrating the degree to which each distribution follows a power-law. Row three displays each graph's assortativity (mean-neighbor degree distribution) plotted on a log-log plot with a best-fit line with slope $-\eta_i$, where an increasingly downward trend can be seen. Row four displays each graph's clustering coefficients plotted vs. vertex degree on a log-log plot. A best-fit line with slope $-\alpha_i$ is given for each demonstrating the degree to which this data follows a power-law.}\label{fig:1}
\end{figure}

To get a sense of how this sequence of graphs compare to real observed networks, we investigate the (i) degree distribution, (ii) the assortativity, and (iii) the clustering coefficients of these graphs. The degree distribution of each graph is shown in the second row of Figure \ref{fig:1}, in which the number of vertices of each graph of a specific degree is plotted vs. vertex degree on a log-log plot. For each we plot a \emph{best-fit line}, which is the straight-line approximation of this data that minimizes the associated $R^2$-value, i.e. coefficient of determination, of the original data plotted on a linear-linear plot ignoring zeros. We note that as the graph is repeatedly specialized the $R^2$-value of these best-fit lines increase meaning that the graph's degree distribution becomes more power-law like as its topology is specialized.

The third row of the figure shows a plot of each graph's assortativity (mean-neighbor degree), which is plotted vs. the degree of each vertex. As the graph is specialized we see a strong downward trend, meaning that, on average, vertices with a large number of neighbors are increasingly connected to vertices that have a small number of neighbors. That is, the network becomes increasingly \emph{disassortative} as it is specialized.

The last row in Figure \ref{fig:1} displays the clustering coefficient for each graph plotted vs. vertex degree on a log-log plot. Again, for each we plot a \emph{best-fit line}. As can be seen in the figure the graph evolves under the rule $r=r_{.91}$ in such a way that its distribution of clustering coefficients becomes more and more colinear meaning that this distribution is becoming increasingly like a power-law.

For this sequence of graphs we also track how the mean distance between all pairs of vertices change with each iteration. This is shown in Figure \ref{fig:1a} where each point is an iterate's mean shortest-distance between all pair of vertices plotted vs. the number of vertices in the graph. These points are then fit with the logarithmic function $L(j)=\log_{\beta}(c j)$ with base $\beta\approx 2.35$ and some constant $c$. As this fit is quite good ($R^2=0.999598$), suggesting that this sequence of specializations has what is referred to as the \emph{small-world property} (or small-world effect).

\begin{figure}
\begin{center}
\begin{overpic}[scale=.25]{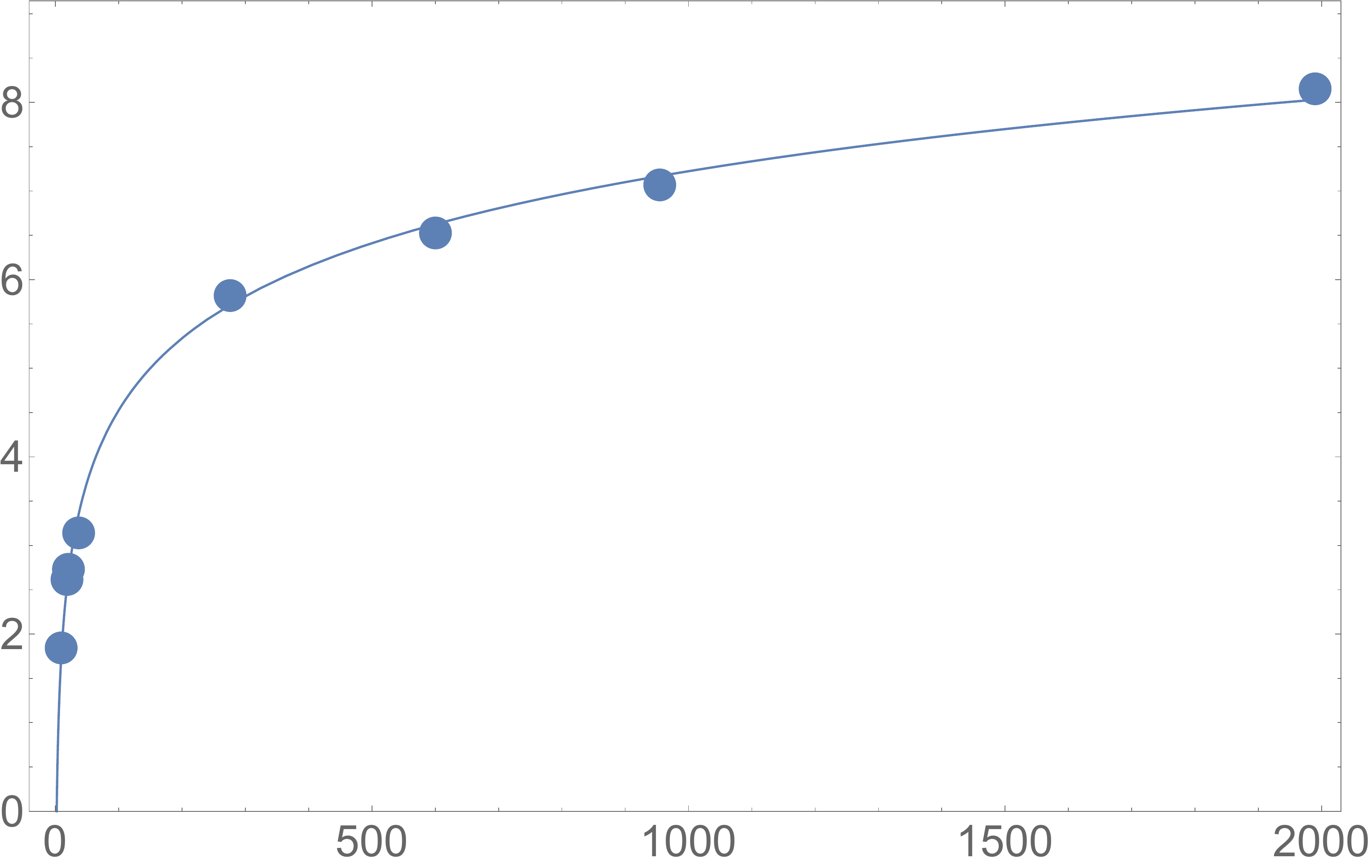}
\put(35,-3){ \fontsize{9}{5}\selectfont number of vertices}
\put(-5,20){\rotatebox{90}{ \fontsize{9}{5}\selectfont mean distance}}
\put(75,20){\tiny $\beta\approx 2.35$}
\end{overpic}
\end{center}
\caption{For the graph $G$ in Figure \ref{fig:1} and the rule $r=r_{.91}$, the mean shortest-distance between every pair of vertices in the graphs $G,r(G),...,r^{7}(G)$ is plotted vs. each graph's number of vertices. As this data can be fit quite well by a function of the form $L(j)=\log_{\beta}(c j)$, this suggests that $G$ evolves with the small-world property under the rule $r$.}\label{fig:1a}
\end{figure}

To better establish that specialization under $r=r_p$ leads to graphs (networks) with the real-world properties (i)-(iv) we create a thousand realizations of this process and investigate the statistics of this ensemble of specializations. For each trial we start with a randomly generated undirected, connected graph $G$ with ten vertices and fifteen edges as in Figure \ref{fig:1}, and sequentially specialize the graph using the rule $r_{.91}$. Once the $\ell$th iterate $r^\ell(G)$ has at least a thousand vertices we stop this process.

We then investigate properties (i)-(iii) of the collection of first iterate, the second iterates, and so on, until we reach the collection of twelfth iterates. For property (i), degree distribution, we take the collection of $\ell$th iterates for $\ell=0,1,\dots,12$ and to each of these iterates we again fit a power-law of the form $D(j)=cj^{-\gamma}$ to each graph's degree distribution. The result is a collection of $\gamma$-values from which we find the mean, median, and first and third quartile-values. We plot these four values for each collection of $\ell$th iterates. This is shown in the top-left of Figure \ref{fig:2}. For each power-law fit we compute an $R^2$-value and similarly plot the mean, median, and first and third quartiles for each iterate in the bottom-left of Figure \ref{fig:2}.

\begin{figure}
\begin{center}
\begin{overpic}[scale=.24]{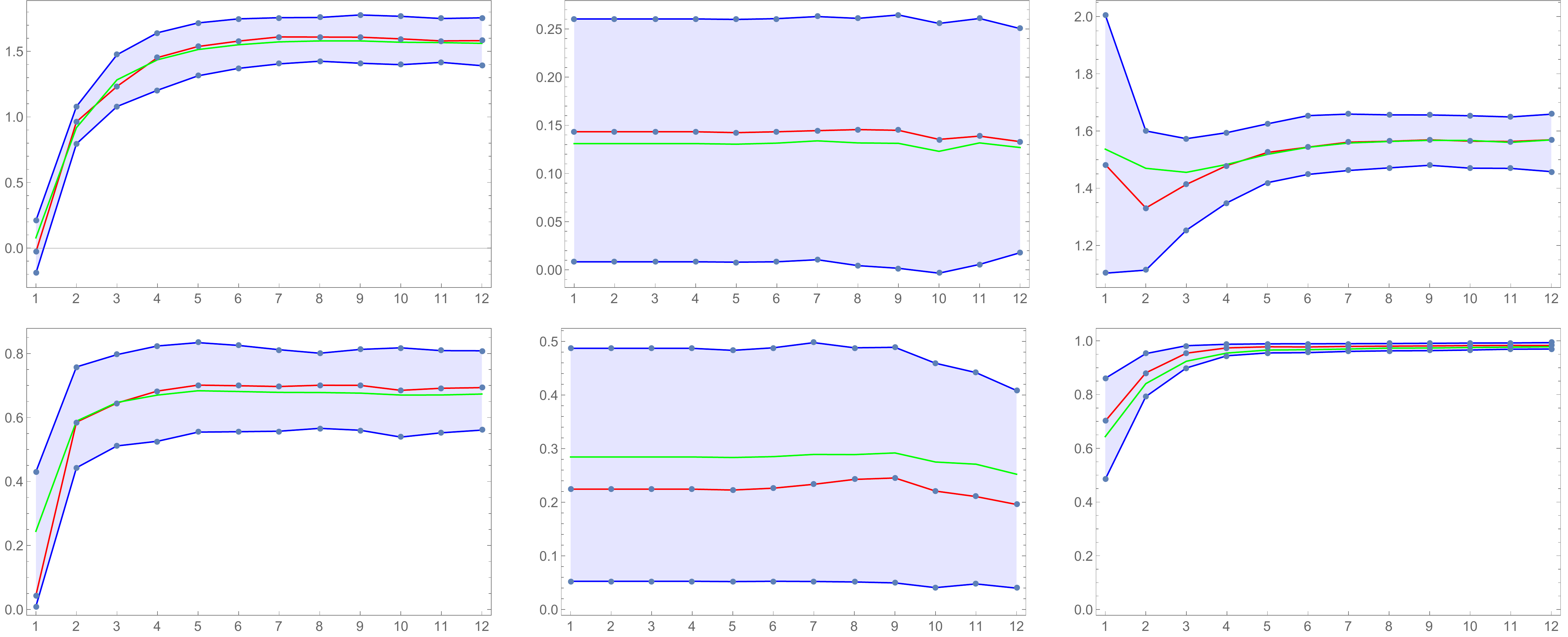}
\put(13,-1.5){\tiny iteration}
\put(47,-1.5){\tiny iteration}
\put(81,-1.5){\tiny iteration}
\put(-2.5,9.5){\tiny $R^2$}
\put(31.5,9.5){\tiny $R^2$}
\put(66,9.5){\tiny $R^2$}
\put(-2,32){\tiny $\gamma$}
\put(32.5,32){\tiny $\eta$}
\put(66.5,32){\tiny $\alpha$}
\end{overpic}
\end{center}
  \caption{Statistical properties are shown for a thousand sequential specializations of a randomly generated graph with 10 vertices and 15 edges under the rule $r=r_{0.91}$. Each graph is sequentially specialized until it has at least one-thousand vertices. The average, mean, and first and third quartiles-values of the quantities $\gamma$, $\eta$, and $\alpha$ representing degree, assortativity, and clustering coefficients, respectively, are shown for each collection of $\ell$th iterates for $\ell=1,\dots, 12$. The mean is shown in green, the median in red, and the first and third quartile-values in blue. The corresponding $R^2$-values are also shown using the same convention.}\label{fig:2}
\end{figure}

It is worth noting that as the graphs grow the average value of $\gamma$ quickly increases to roughly $1.5$ where it stays for the remainder of these iterations. Similarly, the average $R^2$-value rises to roughly $0.7$. Although $R^2$ is not extremely close to $1$, which would indicate a perfect power-law fit, it is nonetheless quite high suggesting that as a graph is repeatedly specialized using the rule $r=r_{.91}$ the result is a graph (network) with a power-law like degree distribution.

Similarly, for property (ii), assortativity, we similarly fit a power-law of the form $A(j)=cj^{-\eta}$ to each graph's mean-neighbor degree distribution. The result is a collection of $\eta$-values from which we again find the mean, median, and first and third quartile-values. We plot these values for each collection of $\ell$th iterates along with the quantities associated with the $R^2$-values. These are shown in the top-middle and bottom-middle of Figure \ref{fig:2}, respectively. For property (iii), the distribution of clustering coefficients, we similarly fit a power-law of the form $C(j)=cj^{-\alpha}$ to each graph's distribution of clustering coefficients and find the average, media, and first and third quartile-values for each collection of iterates. These values along with the associated $R^2$-values are shown in the top-right and bottom-right of Figure \ref{fig:2}, respectively.

These plots indicate that a typical graph develops and/or maintains disassortativity in its mean-neighbor degree distribution as its best-fit line has a nearly constant negative slope under the rule $r$. Additionally, these numerics indicate, especially the $R^2$-values, that a graph quickly develops a clustering coefficient distribution that has a power-law similar what is found in real networks. Hence, in summary, these numerics suggest properties (i)-(iii) of real networks are either developed or maintained by sequentially specializing a graph under the rule $r=r_{.91}$. (Property (iv) is not considered in this case as it is a property of a collection of iterations not a single iterate.)

It is worth noting that the data gathered in Figure \ref{fig:2} is done using the specific value $p=.91$. To further investigate how graphs evolve under the rule $r=r_p$ we also consider a range of $p$-values. In doing so, the idea is to track how properties (i)-(iv) change in response to a change in $p$. For each value of $p$, as before, we start with a random graph $G$ on ten vertices and fifteen edges. We then sequentially specialize using $r=r_p$ until $r^{f}(G)$ has at least one-thousand vertices, so that $r^f(G)$ is the \emph{final} graph in this sequence of specializations. Repeating this process a thousand times we create a thousand ``final" graphs for this particular $p$-value.

Similar to figure \ref{fig:2}, we plot the average, median, and first and third quartile-values of $\gamma=\gamma(p)$, $\eta=\eta(p)$, $\alpha=\alpha(p)$, and $\beta=\beta(p)$ corresponding to properties (i)-(iv), respectively for $p=0.6,0.62,0.64,\dots,0.98$. The quantity $\beta$ is the base of the logarithm $S(j)=\log_{\beta}(c j)$ the \emph{pairwise shortest mean distance plot} of a sequence of graph specializations (cf. Figure \ref{fig:1a}). Hence, we have one-thousand values for each of $\gamma,\eta,\alpha$ and $\beta$ for each value of $p$ we consider.

\begin{figure}
\begin{center}
\begin{overpic}[scale=.25]{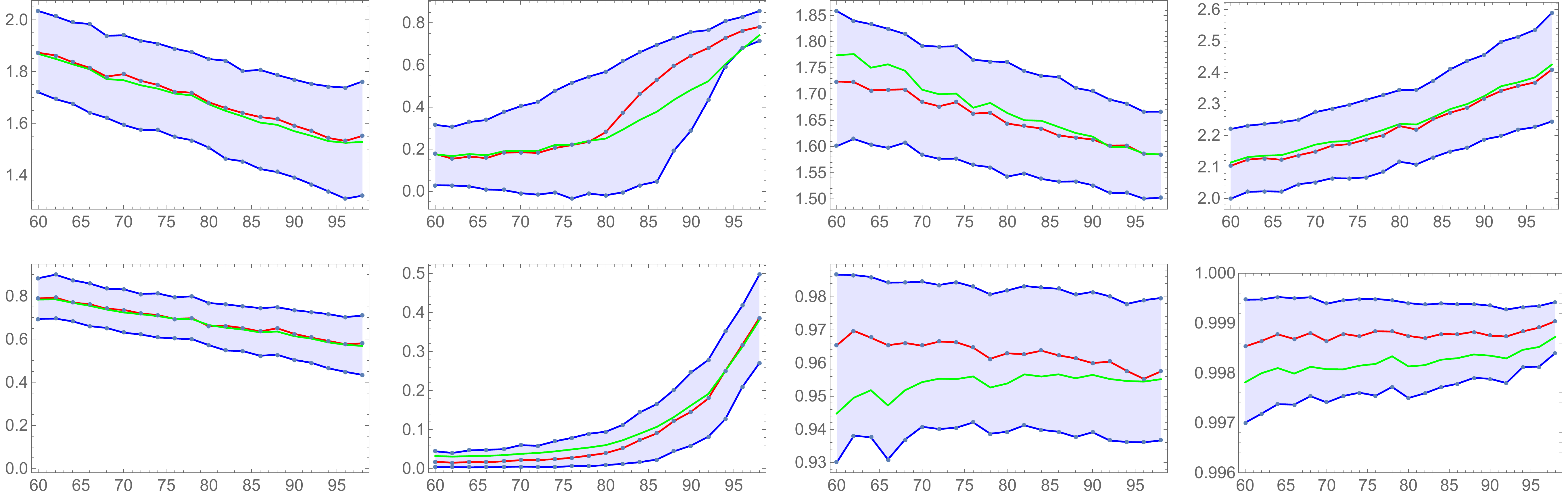}
\put(9,-1){\tiny p-axis}
\put(34,-1){\tiny p-axis}
\put(59,-1){\tiny p-axis}
\put(84,-1){\tiny p-axis}
\put(-1,25){\tiny $\gamma$}
\put(24,25){\tiny $\eta$}
\put(49.7,25){\tiny $\alpha$}
\put(75.2,25){\tiny $\beta$}
\put(-2.2,7.5){\fontsize{7}{5} $R^2$}
\put(23,7.5){\fontsize{7}{5} $R^2$}
\put(48.18,7.2){\fontsize{7}{5} $R^2$}
\put(73.88,7.8){\fontsize{7}{5} $R^2$}
\end{overpic}
\end{center}
  \caption{For the values $p=0.6,0.62,0.64,\dots,0.98$ a thousand graphs are sequentially specialized under the rule $r=r_p$. The
  statistical properties of each collection of specializations are shown above including the average, mean, and first and third quartiles-values of the quantities $\gamma(p)$, $\eta(p)$, $\alpha(p)$, and $\beta(p)$. These represent the degree distribution, assortativity, clustering coefficients, and small-world property, respectively. The mean is shown in green, the median in red, and the first and third quartile-values in blue. The corresponding $R^2$-values are also shown using the same convention.}\label{fig:3}
\end{figure}

The plot of the average, media, and first and third quartile-values for each of $\gamma$, $\eta$, $\alpha$, and $\beta$ are shown in Figure \ref{fig:3}. We also show the corresponding $R^2$-values for each of these approximations. In this figure the quantity $\gamma$ shows a monotonically decreasing trend as does its $R^2$-values. Interestingly, if this trend continues through all values of $p\in(0,1)$ then for $p\in(0,0.45)$ the average value of $\gamma$ in this range will be between 2 and 3, which are the values of $\gamma$ typically observed in many real networks \cite{Newman10}. The reason we do not consider the values $p\in(0,0.6)$ in this simulation is that it is \emph{computationally prohibitive} to specialize graphs over these values. That is, for these $p$-values the graphs grow so quickly in size, the time required to generate a statistically meaningful number of them is quite considerable.

For each of the quantities $\gamma,\eta,\alpha$, and $\beta$ we observe a monotone relationship in $p$, suggesting one can tune the properties of a graph specialized by $r=r_p$ by carefully choosing $p$. Moreover, the $R^2$-values associated with $\gamma,\alpha$, and $\beta$ are quite close to 1 for these $p$-values indicating that properties (i), (iii), and (iv) of graphs generated by $r=r_p$ on average develop these real-world properties as they are specialized under $r=r_p$. Property (ii) does not have a high $R^2$-value but this is not an issue as observations of real networks find only a negative (or positive) trend in a networks mean-neighbor degree distribution. Here we see a negative and therefore disassortative trend for each $p$-value we consider.
\end{example}

\section{Specialization Equivalence}\label{sec4}

As mentioned before Example \ref{ex:rand} there are two types of specialization rule; those that select a unique base vertex subset and those that do not. For instance, the random specialization rule in Example \ref{ex:rand} does not select unique bases. We refer to $\tau$ as a \emph{structural rule} if it does select a unique nonempty subset of vertices from any graph $G$. For instance, $\tau$ could be the rule that selects all vertices with a certain number of neighbors, or eigenvector centrality, etc. (cf. example \ref{ex:evoequ}). An important property of structural rules is that they gives us a way of comparing the topologies of two distinct networks. In particular, any such rule allows us to determine which networks are similar and dissimilar with respect to the rule $\tau$.

To make this precise we say two graphs $G=(V_1,E_1,\omega_1)$ and $H=(V_2,E_2,\omega_2)$ are \emph{isomorphic} if there is a relabeling of the vertices of $V_1$ such that $G=H$ as weighted digraphs. If this is the case, we write $G\simeq H$. The idea is that two graph are similar with respect to a rule $\tau$ if they both evolve to the \emph{same}, i.e. isomorphic graph, under this rule. This allows us to partition all graphs, and therefore networks, into classes of similar graphs with respect to a structural rule $\tau$. This can be stated as the following result.

\begin{theorem}\textbf{(Specialization Equivalence)}\label{thm2}
Suppose $\tau$ is a structural rule. Then $\tau$ induces an equivalence relation $\sim$ on the set of all weighted directed graphs where $G\sim H$ if $\tau(G)\simeq\tau(H)$. If this holds, we call $G$ and $H$ \emph{specialization equivalent} with respect to $\tau$.
\end{theorem}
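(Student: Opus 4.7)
The plan is a direct verification that $\sim$ satisfies the three defining properties of an equivalence relation: reflexivity, symmetry, and transitivity. The key observation is that because $\tau$ is a structural rule, it selects a unique base vertex subset $V_\tau(G) \subseteq V$ for each graph $G=(V,E,\omega)$, so the specialization $\tau(G) = \mathcal{S}_{V_\tau(G)}(G)$ is a well-defined graph depending only on $G$. Thus $\sim$ is essentially the pullback of the graph-isomorphism relation $\simeq$ under the map $G \mapsto \tau(G)$, and it is a general fact that the pullback of an equivalence relation by a well-defined function is again an equivalence relation.

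Concretely, I would first record that $\simeq$ itself is an equivalence relation on weighted directed graphs: reflexivity via the identity relabeling, symmetry because the inverse of a vertex bijection preserving edges and weights is again such a bijection, and transitivity by composing two such bijections. I would then verify the three properties for $\sim$. Reflexivity: $\tau(G)\simeq\tau(G)$ trivially, so $G\sim G$. Symmetry: if $G\sim H$ then $\tau(G)\simeq\tau(H)$, hence $\tau(H)\simeq\tau(G)$ by symmetry of $\simeq$, giving $H\sim G$. Transitivity: if $G\sim H$ and $H\sim K$, then $\tau(G)\simeq\tau(H)$ and $\tau(H)\simeq\tau(K)$, and transitivity of $\simeq$ yields $\tau(G)\simeq\tau(K)$, i.e.\ $G\sim K$.

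The only subtle point — and it is where the structural rule hypothesis is essential — is the well-definedness of $G \mapsto \tau(G)$. For a non-structural rule such as the random rule $r_p$ of Example \ref{ex:rand}, the base $V_\tau(G)$ is not uniquely determined, so $\tau(G)$ is not a single graph and the condition $\tau(G)\simeq\tau(H)$ is not even meaningful as stated; consequently the pullback argument would fail. Once the structural hypothesis is imposed, however, there is no genuine obstacle: the three properties are inherited from $\simeq$ term by term, and no structural information about the specialization construction $\mathcal{S}_B(\cdot)$ of Definition \ref{def:exp} is needed beyond the fact that $\tau(G)$ is a well-defined graph.
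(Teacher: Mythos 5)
Your proposal is correct and follows essentially the same route as the paper's proof: both hinge on the observation that a structural rule makes $G\mapsto\tau(G)$ well defined (unique up to relabeling) and then transfer reflexivity, symmetry, and transitivity from the isomorphism relation $\simeq$. Your framing of $\sim$ as the pullback of $\simeq$ is a slightly cleaner packaging of the same verification the paper carries out directly.
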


\begin{proof}
For any $G=(V,E,\omega)$ and structural rule $\tau$ the set $\tau(V)\subseteq V$ is unique implying the graph $\tau(G)=\mathcal{S}_{\tau(V)}(G)$ is unique up to a labeling of its vertices. Clearly, the relation of being specialization equivalent with respect to $\tau$ is reflexive and symmetric. Also, if $\tau(G)\simeq\tau(H)$ and $\tau(H)\simeq\tau(K)$ then there is a relabeling of the vertices of $\tau(G)$ such that $\tau(G)=\tau(H)$ and of $\tau(K)$ such that $\tau(K)=\tau(H)$. Hence, $\tau(G)=\tau(K)$ under some relabeling of the vertices of these graphs implying $\tau(G)\simeq\tau(K)$. This completes the proof.
\end{proof}

Theorem \ref{thm2} states that every structural rule $\tau$ can be used to partition the set of graphs we consider, and by association all networks, into subsets. These subsets, or more formally \emph{equivalence classes}, are those graphs that share a common topology with respect to $\tau$. By \emph{common topology} we mean that graphs in the same class have the same set of component branches and therefore evolve into the same graph under $\tau$.

One reason for studying these equivalence classes is that it may not be obvious, and most often is not, that two different graphs belong to the same class. That is, two graphs may be structurally similar but until both graphs are specialized this similarity may be difficult to see. One of the ideas introduced here is that by choosing an appropriate rule $\tau$ one can discover this similarity as is demonstrated in the following example.

\begin{figure}
\begin{center}
\begin{tabular}{ccc}
    \begin{overpic}[scale=.33,angle=-180.5]{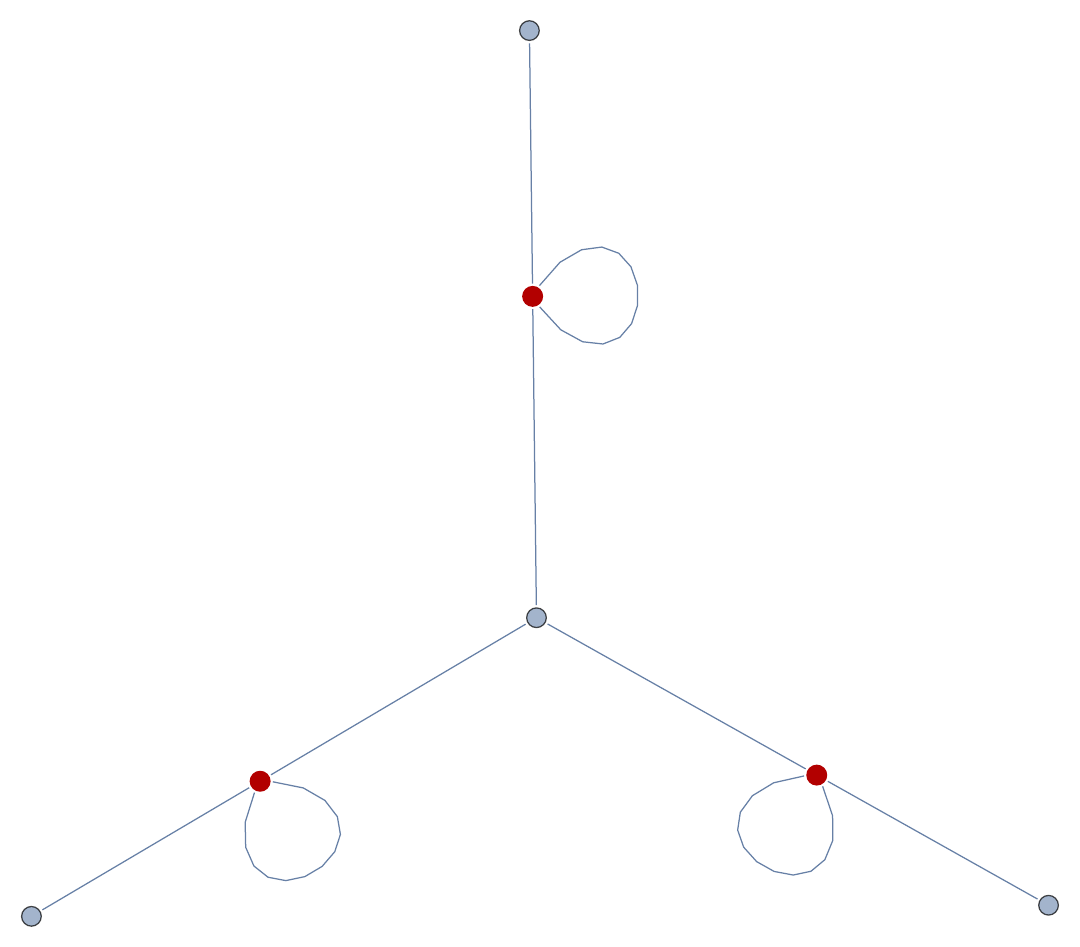}
    \put(38,-5){\put(10,-2){$G$}}
    \end{overpic} &
    \begin{overpic}[scale=.42,angle=181]{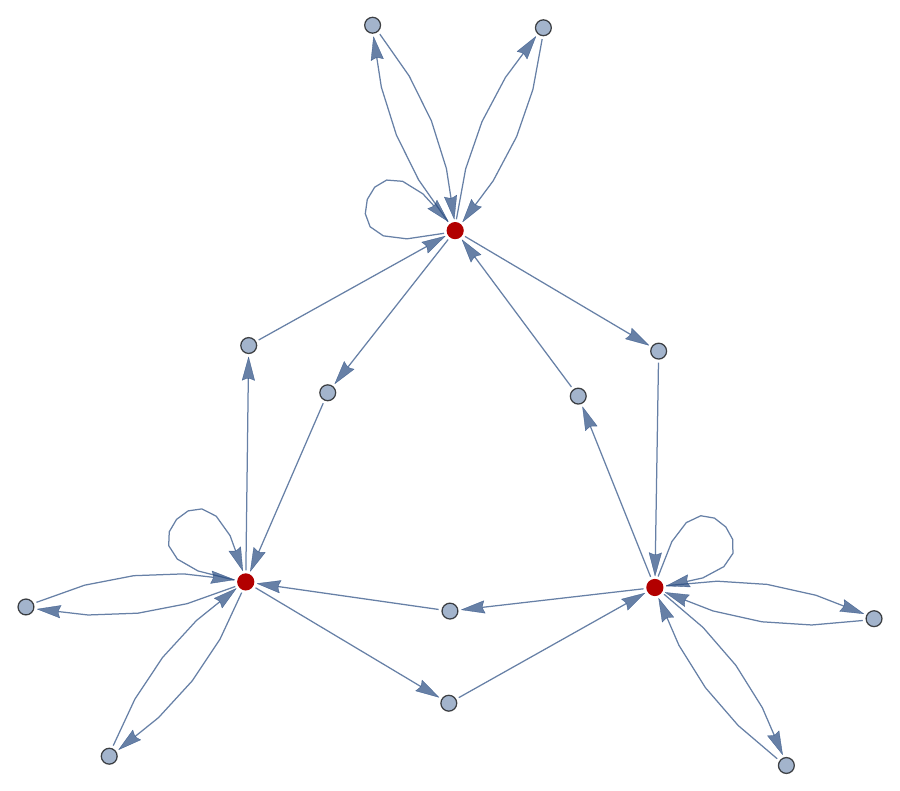}
    \put(25,-7){$\ell(G)\simeq\ell(H)$}
    \end{overpic} &
    \begin{overpic}[scale=.26,,angle=91]{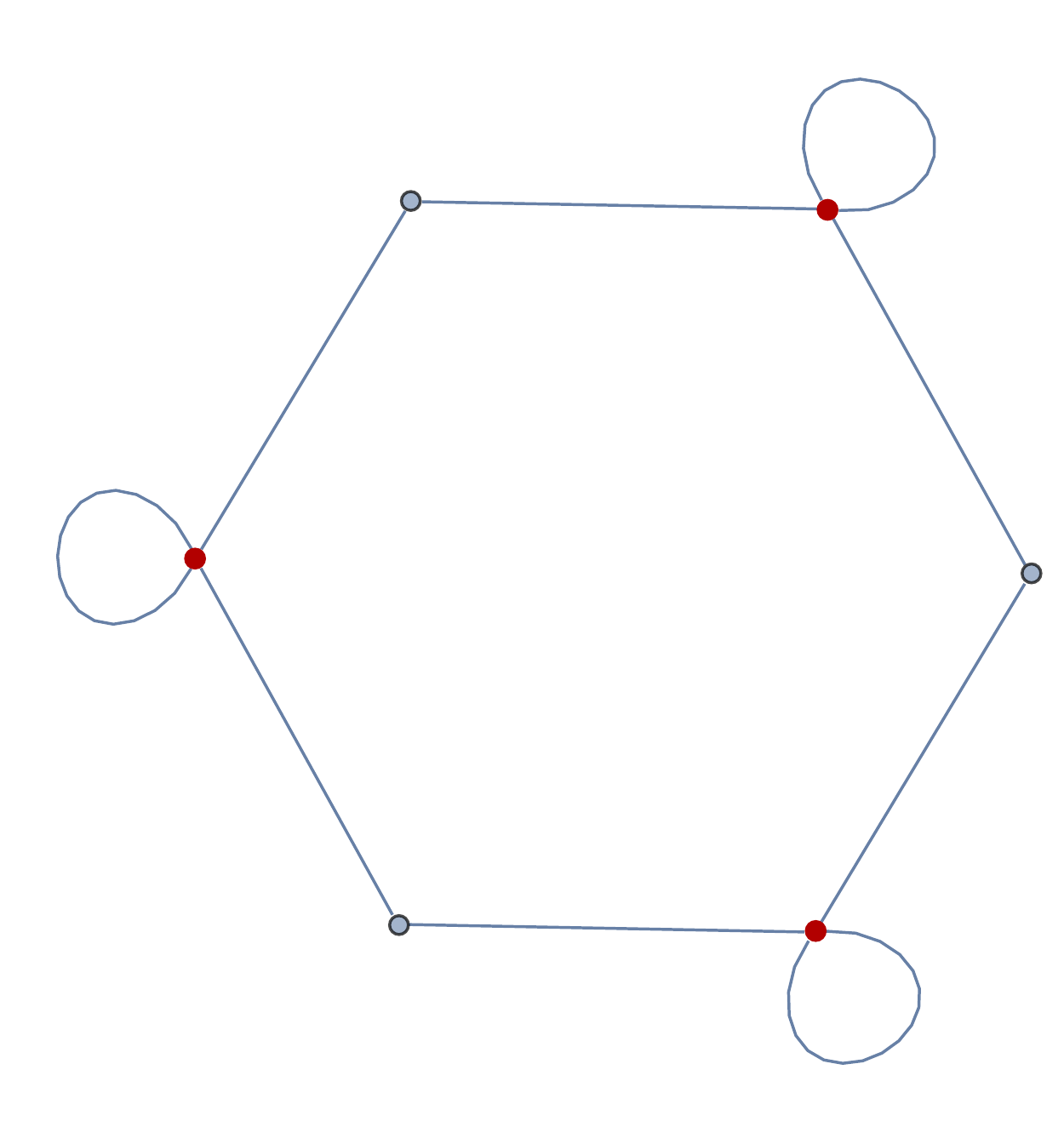}
    \put(46,-7){$H$}
    \end{overpic}
\end{tabular}
\end{center}
  \caption{The graph $G$ and the graph $H$ are specialization equivalent with respect to the rule $\ell$ that selects those vertices of a graph that have loops. That is, the graphs $\ell(G)$ and $\ell(H)$ are isomorphic as is shown.}\label{fig6}
\end{figure}

\begin{example}\label{ex:evoequ} \textbf{(Specialization Equivalent Graphs)}
Consider the unweighted graphs $G=(V_1,E_1)$ and $H=(V_2,E_2)$ shown in Figure \ref{fig6}. Here, we let $\ell$ be the rule that selects all vertices of a graph that have loops, or all vertices if the graph has no loops. The vertices of $G$ and $H$ selected by the rule $\ell$ are the vertices highlighted (red) in Figure \ref{fig6} in $G$ and $H$, respectively. Although $G$ and $H$ appear to be quite different, the graphs $\ell(G)$ and $\ell(H)$ are isomorphic as is shown in Figure \ref{fig6} (center). Hence, $G$ and $H$ belong to the same equivalence class of graphs with respect to the structural rule $\ell$.

It is worth mentioning that two graphs can be equivalent under one rule but not another. For instance, if $w$ is the structural rule that selects vertices \emph{without} loops then $w(G)\not\simeq w(H)$ although $\ell(G)\simeq \ell(H)$.
\end{example}

From a practical point of view, a specialization rule $\tau$ allows those studying a particular class of networks a way of comparing the \emph{specialized topology} of these networks and drawing conclusions about both the specialized and original networks. Of course, the rule $\tau$ should be designed by the particular biologist, chemist, physicist, etc. to have some significance with respect to the networks under consideration.

\section{Concluding Remarks}\label{conc}

In this paper we introduce a class of models of network formation, which we refer to \emph{specialization models} of network growth. These models are based on the observation that most, if not all, real networks specialize the function of their components as they evolve. As a first observation we note that by specializing a network via this model the result is a network whose topology becomes sparser, more modular, and more hierarchical. This is particularly relevant since each of these properties is found throughout networks studied in the biological, technological, and social sciences.

Our method of specialization is highly flexible in that a network can be specialized over any subset of its elements, i.e. any network base. Since so many bases are possible, a natural question is, given a particular real-world network can we find a rule that generates a base that can be used to accurately model this network's growth.

To give evidence that this is possible, we consider the simple rule that randomly selects a certain percentage of a network's elements. We show numerically that this rule evolves the topology of a network in ways that are consistent with the properties widely observed in real networks. This includes (i) having a degree distribution that follows a power-law, i.e. being \emph{scale-free}, (ii) having a \emph{disassortative} property, (iii) having a high \emph{clustering coefficient}, and (iv) having the \emph{small-world property}. So far as the authors know this is the only such model to capture these properties that also creates an increasingly sparse, modular, and hierarchical network topology.

Additionally, we show how certain specialization rules, which we refer to as \emph{structural rules}, can be used to compare the topology of different networks. This notion of similarity, which we refer to as \emph{specialization equivalence}, can be used to partition all networks into those that are similar and dissimilar with respect to a given rule $\tau$. It is worth emphasizing that in practice it is important that this rule be designed by the particular biologist, chemist, physicist, etc. to have some significance with respect to the nature of the networks under consideration.

The notion of specialization and the associated specialization growth model introduced in this paper also lead to a number of open questions, a few of which we mention here. The first, and likely most important, is whether specific specialization rules can be designed to model the growth of specific networks in a way that captures the network's ``finer details," i.e. more than properties (i)-(iv) that are more widely observed in many networks. Such rules will likely be very network dependent and therefore need to be devised and examined again by the particular biologist, chemist, physicist, etc. who has some expertise with the nature of the particular network.

Related to this, properties (i)-(iv) are numerically observed as consequences of specializing using the random rule introduced in example \ref{ex:rand}. It is currently unknown whether these properties can be proven rigorously even for a specific class of initial graphs. More generally, it is unknown for a given rule $\tau$ and graph $G$ what the spectrum of $\tau^k(G)$ is as $k$ increases. Similarly, the topology of the graph $\tau^k(G)$ as $k$ increases is unknown. That is, it is unknown what the graph's \emph{asymptotic spectrum} and \emph{asymptotic topology} are going to be like under $\tau$.

Last, it is worth reiterating that specialization preserves a number of spectral and dynamic properties of the network including the network's eigenvalues, eigenvector centralities, and its dynamic stability under mild conditions. These more technical results are addressed and proven in a following paper.

\section{Acknowledgement} The work of L. A. Bunimovich was partially supported by the NSF grant DMS-1600568. The work of B. Z. Webb is partially supported by was partially supported by the DOD grant HDTRA1-15-0049.

\end{document}